\newtheorem{theorem}{Theorem}[section]
\newtheorem{lemma}[theorem]{Lemma}
\title{\LARGE \bf Stochastic Traveling Salesperson Problem with Neighborhoods for Object Detection}
\author{Cheng Peng, Minghan Wei, and Volkan Isler
\thanks{This work was supported by NSF grant number 1849107.}
\thanks{The authors are with the Department of Computer Science and Engineering,
        University of Minnesota, 200 Union Street SE, Minneapolis, MN, US, 55108
        {\tt\small peng0175, weixx526, isler@umn.edu}}%
\thanks{The authors would like to thank Shan Su for valuable discussions, and Nicolai Hani and Selim Engin for their help with data generation.}
}
\begin{document}

\maketitle
\thispagestyle{empty}
\pagestyle{empty}

\begin{abstract}
We introduce a new route-finding problem which considers perception and travel costs simultaneously. Specifically, 
we consider the problem of finding the shortest tour such that all objects of interest can be detected successfully. To represent a viable detection region for each object, we propose to use an entropy-based viewing score that generates a diameter-bounded region as a viewing neighborhood. We formulate the detection-based trajectory planning problem as a stochastic traveling salesperson problem with neighborhoods and propose a center-visit method that obtains an approximation ratio of $O(\frac{D_{max}}{D_{min}})$ for disjoint regions. For non-disjoint regions, our method provides a novel finite detour in 3D, which utilizes the region's minimum curvature property. Finally, we show that our method can generate efficient trajectories compared to a baseline method in a photo-realistic simulation environment.
\end{abstract}

\section{Introduction}
Coverage path planning for high-quality texture mapping and 3D reconstruction has been an active research topic for decades. For smaller objects, one of the main objectives is to recognize/reconstruct the object with a small number of views.
For larger scenes/structures, trajectory planning is needed regarding the autonomous agent's motion and energy limitations. One commonality for both applications is that path planning is optimized for all visible regions. In basic settings, all surfaces of a scene and objects are treated equally to gain sufficient views for reconstruction or texture mapping~\cite{peng2018view, peng2019view, vasquez2014volumetric, bircher2016receding, isler2016information, fan2016automated, scott2003view}. 

However, there are applications, such as inspection for a particular region, that require only a subset of regions to be covered. Trajectory planning for all surfaces can be  redundant and inefficient. For tasks such as object detection and segmentation, the state-of-the-art detection methods~\cite{yolov3, matterport_maskrcnn_2017} can successfully identify objects from a single view. So we only need to plan for a subset of surfaces/objects that are semantically important. Since the detection and segmentation process can be stochastic~\cite{zhu2017target}, the planning method may also need to consider the uncertainty associated with object states such as locations and orientations.

One trend of work assumes the prior information follows certain probabilistic distributions called Partial Observable Markov Decision Process (POMDP). The goal is to find a trajectory that maximizes the expected long-term gain for detecting all objects. However, problems such as finding the shortest path through multiple locations still pose extreme challenges due to the too large solution to be learned efficiently. There is no performance guarantee compared to optimal solutions. Therefore, works related to path-finding are mostly constrained to single target navigation~\cite{zhu2017target}. 

In this paper, our objective is to plan the shortest trajectory such that all objects of interest in the scene can be detected successfully, where the locations of those objects are given in advance. Such cases arise when we have the locations of the objects and would like to find their particular attributes such as types, brands, and colors. 
Note that in our formulation, specific observation locations are not given and must be explicitly computed (Figure~\ref{fig:traj_detection}).
We formulate this detection task with prior location information into a 3D stochastic Traveling Salesperson Problem with neighborhoods (TSPN) where the neighborhoods are constructed by non-convex regions called diameter-bounded regions (Section~\ref{sec:diameter_bounded}). Once such a region is visited, the object inside can be detected with high confidence. The diameter-bounded region around an object intends to directly map a viewing pose to a detection score. Due to the lack of sufficient data representing objects from all possible viewing poses, we also propose an entropy-based viewing score to find the 3D diameter-bounded region for each object.
Based on this formulation, we present a polynomial time approximation method for the 3D stochastic TSPN problem with $O(\frac{D_{max}}{D_{min}})$ factor for disjoint regions and ($O(\frac{D_{max}^2}{D_{min}^2})$) for non-disjoint regions, where $D_{max}$ and $D_{min}$ are the diameters of the expected prior distribution region. In Section~\ref{sec:expr}, we show that the diameter ratio in the approximation factors is small and not computationally prohibited.

\begin{figure}
    \centering
	\includegraphics[width=0.4\textwidth]{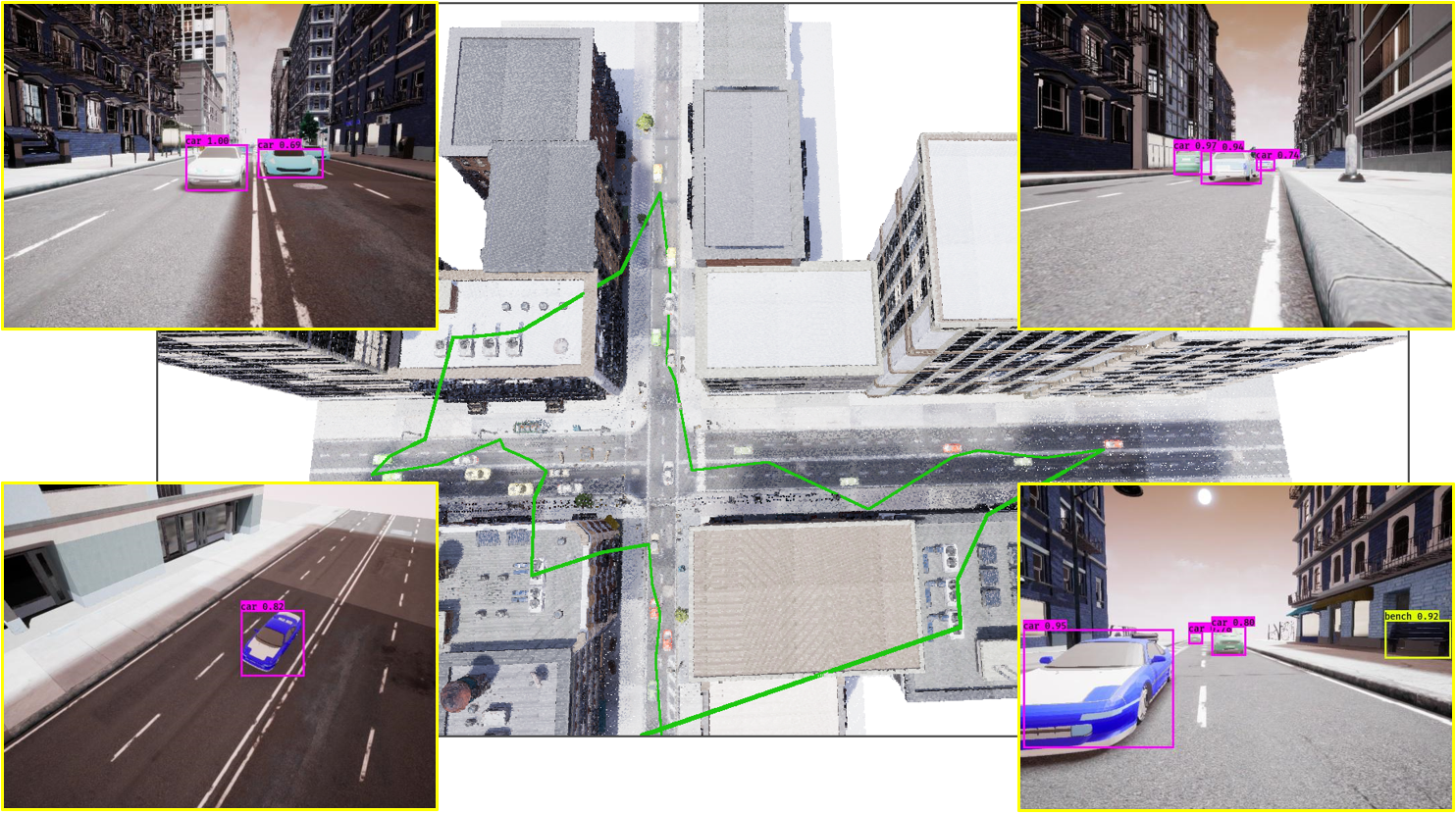}
    \caption{The trajectory of an aerial vehicle for observing a set of cars in the Unreal Engine simulator~\cite{unrealengine}. Images are captured along the trajectory toward the objects of interest. Detection results using yolo-v3~\cite{yolov3} network are shown here.}
    \label{fig:traj_detection}
\end{figure}

\section{Related work}
There is a large amount of literature on shortest-tour planning for a set of neighborhoods. We briefly review related work on the Traveling Salesperson Problem (TSP) and its variant called Stochastic TSP. 
\subsection{Traveling Salesperson Problem}
Traveling Salesperson Problem (TSP) is to find the shortest tour for a set of locations such that each location is visited exactly once. If the distance metric is Euclidean distance, it is called the Euclidean TSP. If an arbitrary metric is used, the problem is called metric TSP~\cite{arora2003approximation}, which is known to be NP-Hard. Numerous approximation algorithms have been proposed for Euclidean TSP. One of the earliest methods provided an approximation ratio 1.5~\cite{christofides1976worst} for Euclidean TSP which traverses a minimum spanning tree of the locations. For metric TSP, a Polynomial Time Approximation Scheme (PTAS) of $1+\epsilon$ for some $\epsilon > 0$ was discovered by Arora~\cite{arora1998polynomial} and Mitchell~\cite{mitchell1999guillotine} where the dimension of the problem need to be constant.

For Traveling Salesperson Problem with Neighborhoods (TSPN), the objective is to visit a set of regions, where the Euclidean version is APX-Hard~\cite{de2005tsp, safra2006complexity, papadimitriou1977euclidean}. The final trajectory must visit one point within the region. This problem was initially studied by Arkin and Hassin~\cite{arkin1994approximation}, where constant factor approximations are provided where the neighborhoods are parallel unit segments, translates of a polygonal region, and circles. Dumitrescu and Mitchell~\cite{dumitrescu2003approximation} provided a constant factor approximation for connected regions with nearly the same diameters in the plane. Elbasio et al.~\cite{elbassioni2005approximation} generalized to disjoint fat regions with $9.1\alpha + 1$ approximation, where $\alpha$ is a measure of fatness that does not constrain the convexity or diameter of the region. For connected regions in higher dimensions, Dumitrescu and Toth~\cite{dumitrescu2016traveling} gave a $O(1)$ approximation algorithm for a set of $n$ hyperplanes and similarly for a set of unit balls in $\mathbb{R}^d$, where $d$ is a constant. 

For more specific covering tasks, Plonski and Isler~\cite{plonski2019approximation} introduced right circular cones as neighborhoods to represent a camera's field of view. Their method provides an approximation factor of $O(1+\log(h_{max}/h_{min}))$, where $h_{max}$ and $h_{min}$ is the maximum and minimum cone height. Stefas et al.~\cite{stefas2018approximation} later extended the problem to cones with different bisector orientations and their method provides an approximation factor of $O(\frac{1+\tan{\alpha}}{1-\tan{\epsilon}\tan{\alpha}}(1+\log(h_{max}/h_{min})))$, where $\epsilon$ is the cone orientation and $\alpha$ is the apex angle.

\subsection{Stochastic Traveling Salesman Problem}
When the neighborhood shapes are stochastic, classic methods no longer apply. To model this problem, a TSPN variant called Stochastic TSPN is introduced that solves for an expected optimal solution or the worst case bound.

Kamousi and Suri~\cite{kamousi2013euclidean} proposed to assign neighborhoods of disks with the radius sampled from a distribution. This problem aims to model applications such as data mule~\cite{tekdas2012efficient} where the distance of communication is not certain. The resulting approximation method is compared with the expected optimal trajectory. With $n$ stochastic disks, their method achieves an approximation factor of $O(\log \log n)$. Bertsimas and Jaillet~\cite{jaillet1988priori} studied a setting where the input locations have some activation probability. Instead of studying the expected solution, Citovsky et al.~\cite{citovsky2017tsp} studied an adversarial case such that the worst-case trajectory length is bounded. However, there are no existing efficient algorithms for more general shapes such as non-convex 3D objects.

Blum et al.~\cite{blum2007approximation} formulated TSP with rewards as an orienteering problem. The goal is to collect as much price as possible subject to a limited path length. When the current state and action are associated with probabilistic distributions, it can therefore be formulated as a Markov Decision Process (MDP)~\cite{blum2007approximation}. When the observation of the current state is not complete, the problem becomes a Partial Observable Markov Decision Process (POMDP), which is normally solved by updating the Bellman equation. To encompass larger dimensions of the state, deep reinforcement learning~\cite{zhu2017target, mnih2015human} was introduced to explore the large state space and learned to approximate the optimal policy with neural networks.
\section{Problem Formulation}
We are given a set of objects $x \in \mathcal{X}$ with their states. The state of each $x$ is denoted as $s_x$, which can be the location and orientation. We also define an attribute function $Attr(x)$ that we wish to predict such as semantic information or metric information. For semantic information, we can treat $Attr(x)$ as the object label. For metric information, we can treat $Attr(x)$ as object location and orientation. In order to predict $Attr(x)$, we need to take a measurement $I(v, s_x)$ where $v$ is the camera pose $v \in \mathbb{SE}^3$.

We are given a function  $F(I(v, s_x, B_x))$ that outputs the estimate of $Attr(x)$ as $\hat{y_x} = F(I(v, s_x, B_x))$ where the ground truth attribute is $y_x^* = Attr(x)$.

To find attributes of all objects in $\mathcal{X}$, we define the corresponding trajectory as $J = \{v_1, v_2, ..., v_{|\mathcal{X}|}\}$, where the trajectory length is denoted as $|J| = \sum_{i=1}^{|J|-1} ||v_{i+1} - v_i||_2$. We take measurements at each $v$ for the corresponding object $x$.

The objective is to minimize the trajectory length $|J|$ such that the difference between the prediction and ground truth of all object attributes is bounded.

\begin{equation}
\label{eq:prob-nonstochastic}
\begin{split}
    & \min \hspace{0.5cm} (len(J)) \\
s.t. \hspace{1cm} & dis(y_x^*, \hat{y_x}) \leq T, \forall x \in \mathcal{X}
\end{split}
\end{equation}

where $dis(\cdot, \cdot)$ is a distance function and $T$ is a distance threshold to upper bound the deviation.

\subsection{Prediction uncertainty}
In practice, the measurement of $\hat{y}_x$ can be probabilistic. Let $\hat{y}_x = P(\hat{y}_x | v, s_x)$ be the probabilistic distribution of measurement results for $s_x$ at a camera pose $v$.
We can define a function: $G(v, s_x) = \int P( \hat{y}_x | v, s_x)(dis(\hat{y}_x - y)) d\hat{y}_x
$. $G(v, s_x)$ returns the expected error of the prediction for $(v)$ and $s_x$.

\subsection{Expected Prediction Region}

The function $G(v, s_x)$ can be considered as a heat map around the object center $C(x)$, which shows the expected prediction error. It is desired that the measurement results for the attribute have high confidence. Therefore, for each object state $s_x$, we define its expected detection region to be $R(s_x) = \{v | G(v, s_x)\leq T\}$, where $T$ is the distance threshold in Equation~\ref{eq:prob-nonstochastic}. In $R(s_x)$, we can obtain a prediction with the expected error less than $T$.

With the above definitions, we could rewrite the {\em stochastic traveling salesperson problem with Neighborhoods problem} as:
\begin{equation}
\label{eq:prob}
\begin{split}
     & \min \hspace{0.5cm} (|J|) \\
s.t. \hspace{1cm} & J\bigcap R(s_x) \neq \emptyset, \forall x \in \mathcal{X}
\end{split}
\end{equation}
In Equation~\ref{eq:prob}, our goal is to minimize the tour length, under the condition that the expected detection error for each object is less than $T$.

In this paper, we make the following two assumptions of the geometry of $R(x)$ based on our experiments in Section~\ref{sec:expr}. First, we assume that $ R(x)$ is a simply connected 3D region. Second, we assume that $R(x)$ is a diameter-bounded region, which can be non-convex (Section~\ref{sec:diameter_bounded}). In Section~\ref{sec:expr}, we generalize detection scores for $60$ objects based on our geometric interpretation of detection accuracy.

\section{Method}
In this section, we propose an algorithm that finds an efficient trajectory to obtain the attribute for each object. For each object denoted as $x \in \mathcal{X}$, the detection range is stochastic. 
To gain insights into the problem, we first propose to solve an offline expected case with known region locations and orientations for each $x \in \mathcal{X}$. This result is then compared with the expected optimal trajectory.  In the first case, the problem becomes to find the shortest tour that visits all $R(x)$ for $x \in \mathcal{X}$. 
Then, we extend the solution for the online case where the orientations of $R(x)$ are unknown.
We make the following assumptions for the regions. The expected region shape $R(x)$ is diameter-bounded (Section~\ref{sec:diameter_bounded}). However, we do not constrain the convexity of $R(x)$. 

\subsubsection{Diameter-bounded region}\label{sec:diameter_bounded}
Before the discussion of the algorithm, we formally define the diameter-bounded region here. 
For each object $x$, we define a diameter-bounded region $R(x)$ with two diameter values $D_{min}(x)$ and $D_{max}(x)$ of $x$, where $D_{min}(x)$ is the minimum diameter of all inscribed circles of $R(x)$ and $D_{max}(x)$ is the maximum diameter of $R(x)$. The diameters are bounded on both ends as $D_{min} \leq D_{min}(x) \leq D_{max}(x) \leq D_{max}$ for all expected regions in $R(\mathcal{X})$. Another way to understand $D_{min}$ is that the curvature of every location on the region boundary is less or equal to $\frac{1}{D_{min}/2}$. The center of $R(x)$ is denoted as $C(x)$, which is defined as the geometric center of $R(x)$.

\subsection{Offline disjoint expected case}

In the offline case, the expected detection region $R(x)$ of each object $x$ is given. In Section~\ref{sec:expr} we show how we generate the regions. The objective is to find the shortest tour, denoted as $E[J^*]$, so that the expected prediction error for each $x$ is less than $T$, as formulated in Equation~\ref{eq:prob}. Note that we use $E[J^*]$ instead of $J^*$ since in $R(x)$, the prediction error is still probabilistic and the expected error is less than $T$

\begin{algorithm}
\caption{Center-Visit}
\hspace*{\algorithmicindent} \textbf{Input:  $s_0 \in \mathbb{R}^3$, $R(\mathcal{X})$} \\
\hspace*{\algorithmicindent} \textbf{Output: $J_t$} 
\begin{algorithmic}[1]\label{alg:center-visit}
\footnotesize
\STATE Compute a trajectory from method~\cite{arora1998polynomial} using region centers from $R(\mathcal{X})$ and assign the resulting visiting order as $P(\mathcal{X})$.
\STATE $J_t \leftarrow \{s_0\}$
\FOR{$x_i \in P(\mathcal{X})$}
\STATE Denote the closest point on $R(x_i)$ to the last point in $J_t$ as $s_i$.
\STATE Append $s_i$ to $J_t$.
\ENDFOR
\STATE Output $J_t$
\end{algorithmic}
\end{algorithm}

\begin{theorem}\label{thrm:center-visit}
Given a set of diameter bounded region $R(x)$, the length of the trajectory $J_t$ from Algorithm~\ref{alg:center-visit} is at most $O(\frac{D_{max}}{D_{min}})$ of $|E[J^*]|$. 
\end{theorem}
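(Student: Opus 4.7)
The plan is to sandwich the algorithm's output $|J_t|$ between the length of Arora's TSP tour on the region centers, $|\mathrm{TSP}(C)|$, and the optimal TSPN length $|E[J^*]|$, and then absorb the slack with a lower bound on $|E[J^*]|$ that exploits the inscribed-ball structure of the diameter-bounded regions.

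For the chain relating $J_t$ to $E[J^*]$, I first argue that each edge of $J_t$ is close to the corresponding edge of the TSP tour on centers. Since $s_i$ is chosen as the closest point of $R(x_i)$ to $s_{i-1}$ and $C(x_i)\in R(x_i)$, the triangle inequality gives $\|s_i-s_{i-1}\| \le \|C(x_i)-s_{i-1}\| \le \|C(x_{i-1})-C(x_i)\| + D_{max}$, where the trailing $D_{max}$ bounds $\|C(x_{i-1})-s_{i-1}\|$ because both points lie in the same diameter-bounded region. Summing over the $n$ edges yields $|J_t|\le |\mathrm{TSP}(C)|+n\,D_{max}$. Conversely, starting from an optimal TSPN tour with visit points $p_i\in R(x_i)$ and inserting a detour from $p_i$ to $C(x_i)$ and back (cost at most $2D_{max}$ per region) produces a feasible center tour, so the exact center TSP satisfies $|\mathrm{TSP}^*(C)|\le |E[J^*]|+2n\,D_{max}$. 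Combining with the $(1+\epsilon)$ factor from the use of Arora's PTAS on line~1 of Algorithm~\ref{alg:center-visit} yields $|J_t|\le (1+\epsilon)|E[J^*]|+O(n\,D_{max})$.

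To convert the additive $n\,D_{max}$ overhead into the multiplicative ratio $O(D_{max}/D_{min})$, I would prove a packing-type lower bound $|E[J^*]|\ge \Omega(n\,D_{min})$. Each region contains an inscribed ball of diameter $D_{min}$, and disjointness of the regions forces these inscribed balls to be pairwise disjoint. Because every point of $R(x_i)$ lies within distance $D_{max}$ of its inscribed ball, thickening the optimal tour by $D_{max}$ produces a tube that must contain all $n$ inscribed balls. Comparing the volume of this tube to the total volume of the balls, together with the simple connectedness of each region, should yield the required linear-in-$n\,D_{min}$ bound, at which point substitution into the upper chain closes the proof.

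The main obstacle will be this last packing step: the lower bound on $|E[J^*]|$ must be sharp enough to produce the stated $O(D_{max}/D_{min})$ factor. A direct tube-volume comparison in three dimensions balances $|E[J^*]|\cdot D_{max}^{2}$ of tube volume against $n\,D_{min}^{3}$ of ball volume and only yields $|E[J^*]|\ge \Omega(n\,D_{min}^{3}/D_{max}^{2})$, which is weaker than needed. Recovering the announced ratio will require a more careful charging scheme---for instance, certifying each inscribed ball by a short local segment of the tour rather than by the full $D_{max}$-enlargement, leveraging the simply-connected diameter-bounded geometry defined in Section~\ref{sec:diameter_bounded}. Once such a lemma is in place, substitution into the upper chain is routine.
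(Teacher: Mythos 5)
Your upper chain is essentially the paper's own: bounding each leg to the closest point by the corresponding center-tour leg plus $D_{max}$, the $2D_{max}$-per-region detour argument giving $|\mathrm{TSP}^*(C)| \le |E[J^*]| + 2nD_{max}$ (this is Lemma~\ref{lem:centervisit}), and the Arora $(1+\epsilon)$ factor in the final assembly. The genuine gap is exactly where you suspected: the packing lower bound $|E[J^*]| \ge \Omega(nD_{min})$ is the paper's central Lemma~\ref{lem:minkowski}, you leave it unproven, and the route you sketch for it (thicken the tour by $D_{max}$ so the tube contains all inscribed balls) is quantitatively doomed, as your own volume computation shows --- balancing $|E[J^*]|\,D_{max}^2$ against $n D_{min}^3$ can never yield better than $\Omega(n D_{min}^3/D_{max}^2)$, and no amount of care in executing that particular comparison will recover the linear-in-$nD_{min}$ bound.

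The paper's fix, which is what your closing remark about ``certifying each ball by a short local segment'' is groping toward, is to make the tube radius $D_{min}/2$ rather than $D_{max}$, and to collect the charge at the point where the tour touches each region rather than at its inscribed ball (which may lie up to $D_{max}$ from the tour). Concretely: sweep a ball $P$ of radius $D_{min}/2$ along $E[J^*]$ (a Minkowski sum). The tour touches the boundary of every region at least once, and when the center of $P$ sits at such a boundary point of $R(x_i)$, the curvature bound of Section~\ref{sec:diameter_bounded} (every boundary point has curvature at most $2/D_{min}$, i.e. the region locally contains a ball of diameter $D_{min}$ tangent there) forces $P \cap R(x_i)$ to occupy at least a constant fraction $\beta$ of $P$'s volume; the paper's extremal configuration is two spheres whose surfaces pass through each other's centers, giving $\beta \ge \frac{5}{12}$. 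Disjointness of the regions makes these $n$ charged volumes pairwise disjoint, and all of them lie inside the swept tube, whose volume is at most $(|E[J^*]| + 2D_{min})\,\pi (D_{min}/2)^2$. Comparing volumes yields $n \le \frac{27}{20 D_{min}}(|E[J^*]| + 2D_{min})$, hence $nD_{max} \le O(\frac{D_{max}}{D_{min}})(|E[J^*]| + D_{min})$, which substituted into your upper chain closes the proof. The quantitative insight you were missing is that the tube radius must scale with $D_{min}$ so that the cross-section $\Theta(D_{min}^2)$ balances a per-region charge of $\Theta(D_{min}^3)$ collected locally at the touching point; the minimum-curvature property is precisely what guarantees this local charge exists, and $D_{max}$ then never enters the lower bound at all.
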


To prove this theorem, we first establish a connection between the number of objects $N = |\mathcal{X}|$ and $|E[J^*]|$. The main idea is to use the Minkowski sum of the optimal trajectory with a ball that will cover a portion of the regions $R(\mathcal{X})$. 
\begin{lemma}\label{lem:minkowski}
Given a set of disjoint diameter-bounded regions $R(x)$ with range $[D_{min}, D_{max}]$, the number of objects $N = |\mathcal{X}|$ is upper bounded as follows by the expected optimal trajectory $|E[J^*]|$.
\begin{equation}
    N \leq \frac{27}{20 D_{min}} (|E[J^*]|+2D_{min})
\end{equation}
\end{lemma}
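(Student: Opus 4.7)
The plan is to follow the volumetric packing argument hinted at in the statement. Set $r := D_{min}/2$ and define the thickened optimal trajectory $T := E[J^*] \oplus B(\mathbf{0}, r)$, the Minkowski sum of the optimal tour with a closed ball of radius $r$. Because $E[J^*]$ is a rectifiable curve of length $|E[J^*]|$ in $\mathbb{R}^3$, the volume of $T$ is at most the swept cylinder plus the two spherical end caps, so
\begin{equation}
|T| \;\le\; \pi r^2\, |E[J^*]| \;+\; \tfrac{4}{3}\pi r^3 \;\le\; \pi r^2\bigl(|E[J^*]| + 2D_{min}\bigr),
\end{equation}
where the last step uses $\tfrac{4}{3}\pi r^3 \le 2D_{min}\cdot \pi r^2$ (equivalent to $r \le \tfrac{3}{2}D_{min}$, which holds since $r=D_{min}/2$).

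Next, I would prove a uniform lower bound $|R(x_i)\cap T|\ge V_0$ of the form $c\,D_{min}^3$ for every $i$. Let $p_i$ be any point where $E[J^*]$ meets $R(x_i)$; since $p_i \in E[J^*]$, the entire ball $B(p_i, r)$ lies in $T$. Two cases arise. If $p_i$ lies at distance $\ge r$ from $\partial R(x_i)$, then $B(p_i,r)\subseteq R(x_i)\cap T$ and contributes the full volume $\tfrac{4}{3}\pi r^3$. Otherwise, the diameter-bounded hypothesis (every boundary point admits an inscribed ball of radius $r$ tangent to the boundary and contained in $R(x_i)$) forces $B(p_i,r)\cap R(x_i)$ to contain a lens with such an inscribed tangent ball. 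The worst configuration is a boundary visit with the inscribed ball oriented along the outward normal, and the resulting two-ball intersection volume has a clean closed form from which the explicit constant $V_0$ producing the $\tfrac{27}{20}$ factor is extracted.

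Since the regions $R(x_i)$ are pairwise disjoint by hypothesis, so are the sets $R(x_i)\cap T$, and summing the lower bounds together with the volume bound on $T$ yields
\begin{equation}
N\,V_0 \;\le\; \sum_{i=1}^N |R(x_i)\cap T| \;\le\; |T| \;\le\; \pi r^2 \bigl(|E[J^*]|+2D_{min}\bigr).
\end{equation}
Substituting $r = D_{min}/2$ and the explicit value of $V_0$ and solving for $N$ produces the claimed bound $N\le \frac{27}{20\,D_{min}}\bigl(|E[J^*]|+2D_{min}\bigr)$.

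The main obstacle is step two: identifying the true worst-case configuration for the intersection $B(p_i,r)\cap R(x_i)$ given only the reach/curvature hypothesis, and then evaluating the two-ball intersection integral so that the constants line up with the stated $\tfrac{27}{20}$. Everything else is routine: the Minkowski-sum tube estimate is a standard inequality, and the disjointness of the contributions follows immediately from the disjointness of the regions.
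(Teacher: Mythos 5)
Your proposal follows the same route as the paper's own proof: thicken $E[J^*]$ by a ball of radius $D_{min}/2$, lower-bound the overlap of the tube with each region by a two-ball lens obtained from the inscribed tangent ball that the diameter/curvature hypothesis supplies, use disjointness of the regions to sum the per-region overlaps, and compare against the tube volume. Your tube estimate and disjointness step are fine, and your case analysis (interior visit versus boundary visit, worst case when the tour point sits on the boundary) is correct: for two balls of radius $r$ with centers at distance $d$ the lens volume is $\frac{\pi}{12}(4r+d)(2r-d)^2$, which at $d=r$ gives $V_0=\frac{5}{12}\pi r^3$ --- exactly the configuration the paper invokes in its footnote (``two unit spheres touching each others' centers'').

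The step you flag as the main obstacle, however, is not merely unfinished --- it cannot produce the stated constant. With $r=D_{min}/2$, your chain yields $N \le \pi r^2\bigl(|E[J^*]|+2D_{min}\bigr)/V_0 = \frac{24}{5D_{min}}\bigl(|E[J^*]|+2D_{min}\bigr)$. To reach $\frac{27}{20 D_{min}}$ you would need $V_0=\frac{5}{27}\pi D_{min}^3\approx 0.582\,D_{min}^3$, which exceeds even the full volume of $B(p_i,r)$, namely $\frac{\pi}{6}D_{min}^3\approx 0.524\,D_{min}^3$; so no worst-case analysis of $B(p_i,r)\cap R(x_i)$ can close that gap. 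The paper's own proof has the same defect: it uses $\beta\ge\frac{5}{12}$ as a fraction of the ball volume $\frac{4}{3}\pi(D_{min}/2)^3$, although the correct fraction for the lens is $\frac{5}{16}$ (the $\frac{5}{12}$ is the coefficient of $\pi r^3$, not of $\frac{4}{3}\pi r^3$), and even taking its displayed inequality at face value gives $N\le\frac{18}{5D_{min}}\bigl(|E[J^*]|+2D_{min}\bigr)$, not $\frac{27}{20D_{min}}(\cdot)$. The constant $\frac{27}{20}$ in the lemma statement thus appears to be an arithmetic slip in the paper; your argument, completed with the lens formula, proves the lemma with $\frac{24}{5}$ in place of $\frac{27}{20}$, and since only the order $N=O\bigl(|E[J^*]|/D_{min}+1\bigr)$ is used downstream, Theorem~\ref{thrm:center-visit}'s $O(D_{max}/D_{min})$ bound is unaffected.
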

\begin{proof}
We consider $E[J^*]$ and the Minkowski sum of a ball $P$ of radius $D_{min}$ sweeping along $E[J^*]$. Since $E[J^*]$ touches each region at least once, the center of $P$ must also be on the boundary of each region at least once when sweeping along $J^*$. When the center of $P$ is on the boundary of a region, the overlapped regions between $P$ and the region is at least $\beta \frac{4}{3}\pi\frac{D_{min}^3}{8}$\footnote{$\beta \geq \frac{5}{12}$, the equality holds when the surface of two unit sphere touches each others' centers.}. The total overlapped volume with the $N$ regions when $P$ follows $E[J^*]$, should be less or equal to the total volume $P$ sweeps. Therefore, we have:
\begin{align*}
        \beta N\frac{4}{3}\pi\frac{D_{min}^3}{8} &\leq (|E[J^*]|+2D_{min}) (\frac{D_{min}}{2})^2\pi \\
        N & \leq \frac{27}{20 D_{min}} (|E[J^*]|+2D_{min})
\end{align*}
where $N = |\mathcal{X}|$.
\end{proof} 

After obtaining the upper bound for the number of objects, we can start constructing the tour to visit each region, which is presented in Algorithm~\ref{alg:center-visit}. The main idea is that given the regions $R(\mathcal{X})$ are fixed and known in prior, adding detours to $E[J^*]$ to the centers of $R(\mathcal{X})$ is guaranteed to cover all centers. Therefore, it must be lower bounded by the optimal trajectory $J_c$ that visits the centers of each region. 

\begin{lemma}\label{lem:centervisit}
Given a set of disjoint regions $R(\mathcal{X})$,
the optimal trajectory $J_c$ that visits the centers of $\mathcal{X}$ is upper bounded as follows. 
$|J_c| \leq |E[J^*]| + ND_{max}$
\end{lemma}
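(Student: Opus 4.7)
The plan is to construct an explicit tour through the centers $\{C(x_1), \ldots, C(x_N)\}$ by piggybacking on the expected optimal tour $E[J^*]$ and detouring to each center; since $J_c$ is the \emph{shortest} tour through the centers, any such construction upper-bounds $|J_c|$.

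First, I would fix, for each object $x_i$, a point $p_i \in E[J^*] \cap R(x_i)$; at least one such point exists by the feasibility constraint of Equation~\ref{eq:prob}. Index the objects in the cyclic order in which $E[J^*]$ visits them, and let $L^*(p_i, p_{i+1})$ denote the arclength of $E[J^*]$ from $p_i$ to $p_{i+1}$. Define a candidate tour $J'$ that visits $C(x_1) \to C(x_2) \to \cdots \to C(x_N)$ in this induced order. Applying the triangle inequality to each consecutive pair of centers,
\begin{equation*}
\|C(x_i) - C(x_{i+1})\| \leq \|C(x_i) - p_i\| + L^*(p_i, p_{i+1}) + \|p_{i+1} - C(x_{i+1})\|,
\end{equation*}
and summing over $i$ (noting $\sum_i L^*(p_i,p_{i+1}) = |E[J^*]|$) gives
\begin{equation*}
|J_c| \leq |J'| \leq |E[J^*]| + 2 \sum_{i=1}^N \|C(x_i) - p_i\|.
\end{equation*}

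The remaining step is to bound each $\|C(x_i) - p_i\|$. Since both points lie inside the diameter-bounded region $R(x_i)$, I would invoke the definition of $C(x_i)$ as the geometric center together with the diameter bound from Section~\ref{sec:diameter_bounded} to obtain $\|C(x_i) - p_i\| \leq D_{max}/2$, i.e., each $R(x_i)$ is contained in a ball of radius $D_{max}/2$ centered at $C(x_i)$. Substituting yields $|J_c| \leq |E[J^*]| + N D_{max}$, as claimed.

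The main obstacle will be this last bound on the center-to-boundary distance. Because $R(x_i)$ may be non-convex, the bare Euclidean diameter controls the distance between two arbitrary points in $R(x_i)$ only by $D_{max}$, which would leak an extra factor of two into the final estimate. Closing that gap requires leveraging the bounded-curvature property in Section~\ref{sec:diameter_bounded} (radius of curvature $\geq D_{min}/2$) together with simple connectedness to argue that the geometric center is ``central enough'' relative to $D_{max}$; without such a refinement, the conclusion weakens to $|J_c| \leq |E[J^*]| + 2N D_{max}$, which still suffices for Theorem~\ref{thrm:center-visit} up to an absolute constant in the $O(D_{max}/D_{min})$ ratio.
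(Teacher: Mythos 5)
Your construction is the same as the paper's. The paper's entire proof is the assertion that $E[J^*]$, augmented with a detour from each region to its center, costs at most $ND_{max}$ extra and dominates $|J_c|$ by optimality of $J_c$ among center-visiting tours; your triangle-inequality bookkeeping with the touching points $p_i \in E[J^*] \cap R(x_i)$ is a careful writeup of exactly that argument.

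The interesting content is the step you flag at the end, and your suspicion is correct: the bound $\|C(x_i) - p_i\| \leq D_{max}/2$ is genuinely false, and the refinement you hope for via bounded curvature and simple connectedness does not exist. Even for a convex region (a thickening of an equilateral triangle with side close to $D_{max}$), the centroid-to-vertex distance is about $D_{max}/\sqrt{3} > D_{max}/2$, and the optimal tour may touch the region precisely at that far vertex. Non-convexity with the curvature constraint makes it worse: for $D_{min} \ll D_{max}$, take a ball of diameter $D_{max}/2$ with a smooth tube of radius $D_{min}/2$ and length $D_{max}/2$ attached; its geometric center sits near the ball's center, at distance about $\frac{3}{4}D_{max}$ from the tube's tip, while all boundary curvatures respect the $\frac{2}{D_{min}}$ bound. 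Even redefining $C(x)$ as the smallest-enclosing-ball center only yields $\sqrt{3/8}\,D_{max}$ per leg by Jung's theorem in $\mathbb{R}^3$, still exceeding $D_{max}/2$. So what your argument actually proves is $|J_c| \leq |E[J^*]| + 2ND_{max}$; the paper's constant $ND_{max}$ is asserted without justification and fails on such examples. The gap you identified is therefore in the paper's proof, not a defect of your approach, and as you note it is harmless downstream: the extra factor of $2$ only changes the constant inside the $O(\frac{D_{max}}{D_{min}})$ ratio of Theorem~\ref{thrm:center-visit} (and in the non-disjoint and online bounds that reuse this lemma).
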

\begin{proof}
To visit the center of each region, $E[J^*]$ needs to take additional detours of at most $ND_{max}$ to visit the centers of each region. $E[J^*]$ together with the detour to visit the region centers forms a tour that visits the center of every region. The length of this tour should larger or equal to $J_c$, since $J_c$ is the optimal tour that visits every region center.
\end{proof}

Given Lemma~\ref{lem:minkowski} and ~\ref{lem:centervisit}, we can therefore present Algorithm~\ref{alg:center-visit} and the evaluate the performance here. 

\begin{proof}
The optimal trajectory that visits a set of points can be approximated from~\cite{arora2003approximation} with a factor of $1+\epsilon$, where $\epsilon \in (0, 1]$. Therefore, by combining Lemma~\ref{lem:minkowski} and ~\ref{lem:centervisit}, we can derive the following bound for a trajectory $J_t$ that visits the region centers using method~\cite{arora2003approximation}.
$\frac{1}{1+\epsilon}|J_t| \leq |E[J^*]| + ND_{max} \leq (1+\epsilon)(\frac{27}{20}\frac{D_{max}}{D_{min}} + 1) E[|J^*|]$
\end{proof}


\subsection{Offline non-disjoint case}
For non-disjoint regions, the common approaches~\cite{peng2019view, plonski2019approximation, dumitrescu2017constant} are to extract a maximal independent set and to construct a detour that visits the peripherals of the current region. Similarly, we also compute a maximal independent set $MIS(\mathcal{X})$ using Algorithm~\ref{alg:mis}. A detour for each region in $MIS(\mathcal{X})$ is then computed to visit its overlapping neighborhoods. However, detours around diameter-bounded regions are proportional to the surface area that can be an infinite detour length.

In this section, we show that given the input geometry $R(\mathcal{X})$ has minimum curvature constraints, the detour can be finite and efficient the planning a 3D world.

\begin{theorem}\label{thrm:detour-visit}
Given a set of non-disjoint regions, there exists a trajectory $|J_d|$ with length at most $O(\frac{D_{max}^2}{D_{min}^2})$ of $|E[J^*]|$.
\end{theorem}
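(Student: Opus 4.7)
The plan is to combine the maximal independent set approach referenced in Algorithm~\ref{alg:mis} with a finite surface detour around each representative region, and then bound both pieces using the diameter constraints from Section~\ref{sec:diameter_bounded}. Specifically, I would first extract a maximal independent set $MIS(\mathcal{X}) \subseteq \mathcal{X}$ of pairwise disjoint regions. Since $MIS(\mathcal{X})$ is a disjoint family, I can invoke Theorem~\ref{thrm:center-visit} to produce a base tour $J_{MIS}$ through their centers of length at most $O(D_{max}/D_{min}) \cdot |E[J^*]|$, using the lower bound on $|E[J^*]|$ in terms of $|MIS(\mathcal{X})|$ given by Lemma~\ref{lem:minkowski}. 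Each non-MIS region $x'$ overlaps at least one $x \in MIS(\mathcal{X})$ by maximality, so it is enough to attach a local detour around each chosen $R(x)$ that is guaranteed to intersect every overlapping $R(x')$.

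Next I would construct, for each $x \in MIS(\mathcal{X})$, a finite \emph{surface detour} $\sigma_x$ that passes within distance $D_{min}/2$ of every point on the boundary of $R(x)$. The key geometric input is the minimum-curvature property: because every boundary point has curvature at most $2/D_{min}$ and the region fits in a ball of diameter $D_{max}$, the boundary area is $O(D_{max}^2)$. I would then place a $D_{min}$-net of $O(D_{max}^2/D_{min}^2)$ points on that boundary and traverse them via a TSP-like sweep whose edges each have length $O(D_{min})$, yielding $|\sigma_x| = O(D_{max}^2/D_{min})$. Since any $R(x')$ overlapping $R(x)$ contains a ball of radius $D_{min}/2$ that touches $R(x)$'s boundary, $\sigma_x$ is forced to intersect $R(x')$; thus one detour per MIS region handles all of its overlapping neighbors simultaneously.

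Finally I would assemble $J_d$ by splicing the detour $\sigma_x$ into $J_{MIS}$ at the visit to $C(x)$, at an additional cost of $O(D_{max})$ per splice to enter and leave. Summing over $x \in MIS(\mathcal{X})$ and using $|MIS(\mathcal{X})| = O(|E[J^*]|/D_{min})$ from Lemma~\ref{lem:minkowski}, the total detour cost is
\begin{equation*}
    \sum_{x \in MIS(\mathcal{X})} |\sigma_x| \;\leq\; O\!\left(\frac{|E[J^*]|}{D_{min}}\right) \cdot O\!\left(\frac{D_{max}^2}{D_{min}}\right) \;=\; O\!\left(\frac{D_{max}^2}{D_{min}^2}\right) |E[J^*]|,
\end{equation*}
which dominates the $O(D_{max}/D_{min}) |E[J^*]|$ base tour and yields the claimed ratio.

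The main obstacle I anticipate is the surface-detour construction: showing that a non-convex, diameter-bounded region with curvature bounded by $2/D_{min}$ admits a traversal of length $O(D_{max}^2/D_{min})$ whose tube of radius $D_{min}/2$ covers the boundary. The bounded-curvature hypothesis is exactly what rules out thin spikes or fractal boundaries (which could blow up the surface area) and what gives a well-defined normal tube in which a net of $D_{min}$-separated points can be connected by short boundary-hugging arcs. Formalizing this covering-by-net argument while keeping edge lengths $O(D_{min})$ is the delicate step; the rest is bookkeeping that plugs into Lemmas~\ref{lem:minkowski} and~\ref{lem:centervisit}.
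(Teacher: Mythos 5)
Your overall skeleton is exactly the paper's: extract $MIS(\mathcal{X})$ via Algorithm~\ref{alg:mis}, run the center-visit machinery of Theorem~\ref{thrm:center-visit} on the disjoint MIS family, attach one finite detour per MIS region that is guaranteed to hit every overlapping neighbor, and charge the number of detours against $|E[J^*]|$ via Lemma~\ref{lem:minkowski}. The divergence --- and the genuine gap --- is in the per-region detour, which the paper isolates as Lemma~\ref{lem:detour}. You derive your detour length from the claim that curvature bounded by $2/D_{min}$ together with diameter at most $D_{max}$ forces boundary area $O(D_{max}^2)$. That implication is false: a tube of radius $D_{min}/2$ coiled inside a ball of diameter $D_{max}$ satisfies the curvature bound everywhere, yet only volume packing limits its centerline length to $\Theta(D_{max}^3/D_{min}^2)$, giving boundary area up to $\Theta(D_{max}^3/D_{min})$. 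Bounded curvature rules out thin spikes and fractal boundaries, as you say, but not long thin snakes. With the correct packing-based area bound, your net has $O(D_{max}^3/D_{min}^3)$ points, the per-region detour becomes $O(D_{max}^3/D_{min}^2)$, and the final ratio degrades to $O(D_{max}^3/D_{min}^3)$, missing the theorem. A second, smaller soft spot: even granting $m = O(D_{max}^2/D_{min}^2)$ net points, a traversal visiting all of them with every edge of length $O(D_{min})$ (total $O(m\,D_{min})$) is not automatic from the net property alone --- the generic bound for a tour of $m$ points in a ball of diameter $D_{max}$ is only $O(m^{2/3} D_{max})$ --- so you still owe an explicit ordering with short connections.

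The paper's construction in Section~\ref{sec:detour} sidesteps both issues by being concrete rather than area-based: it slices $R(x)$ with $O(D_{max}/D_{min})$ parallel planes spaced $D_{min}$ apart along the chord $\overline{a,b}$ realizing $D_{max}$, traverses each cross-section curve (asserting each such perimeter has length at most $\pi D_{max}$), and plants inward and outward spikes of length $D_{min}$ at $D_{min}$ spacing along each curve, totaling $\frac{3\pi D_{max}^2}{D_{min}}$ per region. The plane sweep is precisely what supplies the short-edge ordering your net argument lacks, and the length budget is controlled slice by slice instead of through a global surface-area estimate that the hypotheses do not support. To repair your version, you would either need a stronger hypothesis than bounded curvature to justify the $O(D_{max}^2)$ area claim, or you should replace the net-and-tube argument with a slicing construction of this kind; the bookkeeping that follows (splice cost $O(D_{max})$ per MIS region and the count from Lemma~\ref{lem:minkowski}) then matches the paper's proof of Theorem~\ref{thrm:detour-visit} verbatim.
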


\begin{algorithm}
\footnotesize
\caption{Maximal-Independent-Set}
\hspace*{\algorithmicindent} \textbf{Input: $R(\mathcal{X})$} \\
\hspace*{\algorithmicindent} \textbf{Output: $MIS(\mathcal{X})$} 
\begin{algorithmic}[1]\label{alg:mis}
\STATE $MIS(\mathcal{X}) = \emptyset$
\STATE Sort $R(\mathcal{X})$ using $D_{max}$ for each region in ascending order
\WHILE{ $R(\mathcal{X}) \neq \emptyset$}
\STATE $R(x_i)$ = the region with the smallest $D_{max}$ in $R(\mathcal{X})$
\STATE $MIS(\mathcal{X}) = MIS(\mathcal{X}) \bigcup R(x_i)$
\STATE Remove all $R(x_j)$ from $R(\mathcal{X})$ that intersect $R(x_i)$, i.e. $R(x_j) \bigcap R(x_i) \neq \emptyset$
\ENDWHILE
\STATE Output $MIS(\mathcal{X})$
\end{algorithmic}
\end{algorithm}

After obtaining the $MIS(\mathcal{X})$, we can use Algorithm~\ref{alg:center-visit} to generate a path that visits the center of each region. Since we have a lower bound on the region curvature, it is possible to construct a finite trajectory that visits all overlapping regions of $MIS(\mathcal{X})$.

\subsubsection{Detour}\label{sec:detour}
\begin{figure}
    \centering
    \includegraphics[width=0.4\textwidth]{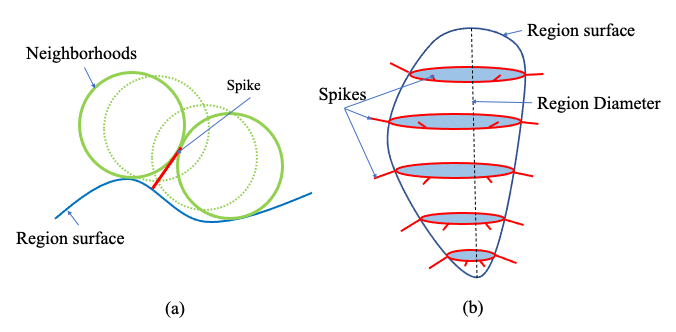}
    \caption{(a) In 2D, a detour (spike) can visit all neighborhood disks that touch the region surface. (b) In 3D, a detour (red curves) can visit all neighborhoods that touch the region surface.  The spikes are line segments with distance $D_{min}$ that are parallel to the surface normal.}
    \label{fig:detour}
\end{figure}
Denote a set of overlapping regions to $R(x) \in MIS(\mathcal{X})$ as $K(x) = \{R(y_1), R(y_2), R(y_k)\}$ such that $R(x) \bigcap R(y_j) \neq 0, \ \forall j=[1,..,k]$. First, we find the line segment with two endpoints $a,b$ in $R(x)$ that correspond to the $D_{max}$ of $R(x)$.
We also define a plane $P_i$ that is perpendicular to $\overline{a, b}$ and pass through point $a + \frac{\overline{a, b}}{|\overline{a, b}|} D_{min} * i, \ i=[1,...,k-1]$, which is a equal space of $D_{min}$ along the $\overline{a, b}$ line segment. From point $a$ to $b$, we visit the curve created by the intersection between $R(x)$ and the plane $P_i$. For each visit of the perimeter at plane $P_i$, we generate a set of spikes in the same plane as shown in Figure~\ref{fig:detour} of length $D_{min}$ with $D_{min}$ spacing around the curve. Those spikes and the perimeters will guarantee to touch all possible overlapping regions in $K(x)$.

\begin{lemma}\label{lem:detour}
Given a set of overlapping regions $R(x) \in MIS(\mathcal{X})$ as $K(x) = \{R(y_1), R(y_2), R(y_k)\}$. There exists a detour of at most $\frac{3\pi  D_{max}^2}{D_{min}}$ that visits all regions in $K(x)$.
\end{lemma}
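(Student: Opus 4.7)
The plan is to (a) verify correctness of the detour described in Section~\ref{sec:detour}, and (b) sum the lengths of its three constituent parts: perimeter traversals $\Gamma_i = \partial(R(x)\cap P_i)$, outward spikes of length $D_{min}$ attached to $\Gamma_i$ at $D_{min}$ spacing, and transverse hops of $D_{min}$ moving between consecutive cutting planes.

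For correctness, I would pick an arbitrary $R(y_j)\in K(x)$ and use its diameter-bounded property to extract an inscribed ball $B\subseteq R(y_j)$ of radius at least $D_{min}/2$. Since $R(y_j)\cap R(x)\neq\emptyset$ and the planes $P_i$ are spaced $D_{min}$ apart along a segment of length at most $D_{max}$ that spans the extent of $R(x)$ along $\overline{a, b}$, the ball $B$ must meet some plane $P_i$ in a disk $\Delta$ of radius at least $D_{min}/2$. Either $\Delta$ crosses $\Gamma_i$, and the detour already enters $R(y_j)$ via the perimeter traversal, or $\Delta$ sits entirely outside $R(x)\cap P_i$, in which case one of the outward spikes reaches it: with spikes of length $D_{min}$ placed every $D_{min}$ around $\Gamma_i$, no disk of radius $D_{min}/2$ can sit within the $D_{min}$-neighborhood of $\Gamma_i$ on the outside without being stabbed by some spike.

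For the length bound, the number of planes is at most $\lceil D_{max}/D_{min}\rceil$ since $|\overline{a, b}|\leq D_{max}$. On each plane, I would argue $|\Gamma_i|\leq \pi D_{max}$; the number of spikes is then at most $|\Gamma_i|/D_{min}\leq \pi D_{max}/D_{min}$ with total spike length at most $\pi D_{max}$, and there is one transverse hop of $D_{min}$ per plane to move to the next cutting plane. The per-plane budget is at most $2\pi D_{max} + D_{min}$; multiplying by $D_{max}/D_{min}$ and using $D_{max}\geq D_{min}$ yields the claimed bound $|J_d| \leq 3\pi D_{max}^2/D_{min}$.

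The hardest step will be the perimeter bound $|\Gamma_i|\leq \pi D_{max}$ for the non-convex case. For convex cross-sections this is immediate from Cauchy's formula. For non-convex cross-sections I would use the inherited boundary curvature bound of at most $2/D_{min}$ from $R(x)$ to rule out arbitrarily tight wiggles, then compare $\Gamma_i$ to the perimeter of its convex hull (which sits inside a disk of diameter $D_{max}$) to conclude that $|\Gamma_i|$ exceeds $\pi D_{max}$ by at most a constant factor, absorbed into the coefficient $3\pi$ of the final inequality.
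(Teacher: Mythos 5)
You follow the paper's own construction from Section~\ref{sec:detour} (cutting planes $P_i$ spaced $D_{min}$ apart, perimeter walks $\Gamma_i$, spikes at $D_{min}$ spacing) and essentially the paper's counting ($D_{max}/D_{min}$ planes, $\pi D_{max}$ per perimeter, $\pi D_{max}/D_{min}$ spikes per plane), but your correctness argument has genuine gaps. First, your coverage dichotomy --- ``either $\Delta$ crosses $\Gamma_i$ or $\Delta$ sits entirely outside $R(x)\cap P_i$'' --- is false: an overlapping region $R(y_j)$ can intersect $R(x)$ with its bulk strictly \emph{inside} $R(x)$ (e.g., a ball of diameter $D_{min}$ poking inward through $\partial R(x)$ between two spikes), in which case the perimeter walk and all outward spikes miss it. This is precisely why the paper's spikes extend $D_{min}/2$ both outward \emph{and inward} from each point $c$ on the perimeter, rather than $D_{min}$ outward only; dropping the inward spikes is not a cosmetic change in the accounting, it breaks coverage.

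Second, your witness ball is the wrong object. The diameter-bounded property gives an inscribed ball $B$ of radius $D_{min}/2$ \emph{somewhere} in $R(y_j)$, but nothing places $B$ near $R(x)$: $R(y_j)$ merely touches $R(x)$ and can extend up to $D_{max}$ away, so $B$ may lie entirely outside the slab spanned by the planes $P_i$ and beyond the reach of every spike. The argument must be localized at the contact point with $R(x)$, using the curvature reading of the definition (every boundary point of $R(y_j)$ admits an internally tangent ball of radius $D_{min}/2$), which is what the paper's phrase about covering ``regions that touch $R(x)$ within a circle of diameter $D_{min}$ centered at point $c$'' is gesturing at. Relatedly, your claim that $B$ meets some plane ``in a disk $\Delta$ of radius at least $D_{min}/2$'' is false: with planes spaced exactly $D_{min}$ apart, a ball of radius $D_{min}/2$ can be tangent to its nearest plane, so $\Delta$ can degenerate to a point; the stabbing step must be argued with the 3D ball against the grid of perimeters and spikes, not with a fat planar disk. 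Finally, your rescue plan for the non-convex perimeter bound cannot work as stated: comparing $\Gamma_i$ to its convex hull does not give a constant factor, since a serpentine strip of width about $D_{min}$ folded inside a diameter-$D_{max}$ region satisfies both the inscribed-circle and curvature-$\le 2/D_{min}$ constraints yet has boundary length $\Theta(D_{max}^2/D_{min})$. To your credit, in flagging this step you have identified a real soft spot: the paper itself asserts $|\Gamma_i| \le \pi D_{max}$ with no justification, and for genuinely non-convex cross-sections that assertion needs either an additional hypothesis or a weaker (and larger) bound.
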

\begin{proof}
Since all the regions have a minimum diameter of $D_{min}$, the spikes on the perimeter will need to be at most $D_{min}/2$ centered at a point $c$ on $R(x)$ for both outward and inward to cover the neighborhoods. The neighborhoods that the spike covers will be all the overlapping regions that touch $R(x)$ within a circle of diameter $D_{min}$ centered at point $c$. The maximum length for each perimeter is $\pi D_{max}$ and the maximum length for all spikes on the same perimeter is $\frac{\pi D_{max}}{D_{min}} D_{min}$. Therefore, the total detour length for $R(x_1)$ is at most
$\frac{D_{max}}{D_{min}}(\pi D_{max} + 2 \frac{\pi D_{max}}{D_{min}} D_{min}) = \frac{3\pi D_{max}^2}{D_{min}}$

\end{proof}

Given Lemma~\ref{lem:detour} and Theorem~\ref{thrm:center-visit}, we can present the proof for Theorem~\ref{thrm:detour-visit}.

\begin{proof}
There are at most $N$ detours for each region in the $MST(\mathcal{X})$. Similar to Theorem~\ref{thrm:center-visit}, we can obtain the final approximation factor $|J_d| \leq (1+\epsilon)(1 + \frac{27D_{max}}{40D_{min}} + \frac{81\pi D_{max}^2}{20D_{min}^2}) |E[J^*]|$
\end{proof}

\section{Online disjoint case}
When visiting for a set of objects $\mathcal{X}$ with only prior knowledge of their locations, we cannot orientate their regions $R(\mathcal{X})$.
The only information is the maximum and minimum detection range, which corresponds to $D_{max}/2$ and $D_{min}/2$. Therefore, the region shape for each object changes to a hollow ball. 

\begin{theorem}
Given a set of disjoint spheres in 3D with diameters bounded between $D_{max}$ and $D_{min}$, the length of the expected trajectory $J$ from Algorithm~\ref{alg:center-visit} is at most $O(\frac{D_{max}}{D_{min}})$ of $|E[J^*]|$.
\end{theorem}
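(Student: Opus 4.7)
The plan is to follow the template of Theorem~\ref{thrm:center-visit} and re-prove the two supporting lemmas for the spherical regions that arise in the online setting. Without orientation information, each $R(x)$ degenerates to a sphere (or spherical shell) centered at the known object location with diameter in $[D_{min}, D_{max}]$; these regions are still diameter-bounded in the sense of Section~\ref{sec:diameter_bounded}, and each admits a well-defined geometric center $C(x)$ on which Algorithm~\ref{alg:center-visit} runs its Arora subroutine.

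First I would re-derive the analogue of Lemma~\ref{lem:minkowski}, bounding $N = |\mathcal{X}|$ in terms of $|E[J^*]|$. Sweeping a ball of radius $\Theta(D_{min})$ along $E[J^*]$, the moment its center first reaches a sphere a constant fraction of the sweep ball lies on the inside of the local tangent plane, because the uniform lower bound $D_{min}/2$ on the radius of curvature prevents the boundary from folding sharply away. Each object therefore contributes $\Omega(D_{min}^3)$ to the total overlap, and equating this with the total swept volume (a tube of length $|E[J^*]|$ capped by two hemispheres) gives $N = O\!\left(\tfrac{|E[J^*]| + D_{min}}{D_{min}}\right)$, mirroring Lemma~\ref{lem:minkowski}.

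Next, every point of a sphere of diameter at most $D_{max}$ lies within $D_{max}/2$ of its center, so the analogue of Lemma~\ref{lem:centervisit} follows immediately: augmenting $E[J^*]$ with a there-and-back detour to each center costs at most $D_{max}$ per object, giving $|J_c| \le |E[J^*]| + N\,D_{max}$ for the optimal center-tour. Algorithm~\ref{alg:center-visit}'s output then satisfies $|s_i s_{i+1}| \le |C(x_i)\,C(x_{i+1})| + D_{max}$ by the triangle inequality, since each $s_i$ is within $D_{max}/2$ of $C(x_i)$. Summing along the Arora tour yields $|J_t| \le (1+\epsilon)|J_c| + N\,D_{max}$, and substituting the two bounds above collapses this to $|J_t| = O(D_{max}/D_{min})\,|E[J^*]|$.

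The main obstacle is the Minkowski overlap estimate, because a spherical shell can be thin and a sweeping ball may clip it only in a thin cap rather than filling a full half-ball as in Lemma~\ref{lem:minkowski}. Verifying that this cap still has volume $\Omega(D_{min}^3)$ uniformly over all entry geometries and all admissible inner/outer radii is the only real technicality; choosing the sweep radius comparable to the shell thickness and invoking the curvature bound on both bounding spheres is what makes the estimate close. Once this is in hand, the rest of the argument is routine bookkeeping.
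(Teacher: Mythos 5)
Your second half (the center-detour bound $|J_c| \le |E[J^*]| + N D_{max}$, the triangle-inequality transfer to Algorithm~\ref{alg:center-visit}'s closest-point tour, and the $(1+\epsilon)$ Arora factor) matches the paper's bookkeeping. The gap is exactly where you locate it yourself, and it is not a "routine technicality": the Minkowski overlap estimate does not survive the online setting, and the paper explicitly abandons it for this reason ("using Lemma~\ref{lem:minkowski} will introduce a large ratio due to the uncertainty of the detection range"). In the online case the region an optimal tour is guaranteed to touch is a hollow ball --- a spherical surface at an unknown radius in $[D_{min}/2, D_{max}/2]$ --- which has zero volume, so there is nothing of size $\Omega(D_{min}^3)$ for the sweep ball to clip. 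Your proposed repair, taking the sweep radius comparable to the shell thickness $t$, makes the count bound scale like $N = O(|E[J^*]|/t)$ (the tube volume shrinks with the sweep radius while per-object overlap shrinks like $t^3$ in the worst clipping geometry), and $t$ is not bounded below by anything in the problem data; this is precisely the "large ratio" failure mode. The alternative repair --- clipping against the solid ball enclosed by each sphere, which does give an $\Omega(D_{min}^3)$ lens per touch --- requires those solid balls to be pairwise disjoint so the overlaps can be summed against the swept tube, and disjointness of the sphere surfaces alone does not give this: a small sphere can nest inside a large one. So the counting step, which is the heart of the theorem, does not close as proposed.

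The paper takes a different route for exactly this step: it imports the packing lower bound of Tekdas et al.~\cite{tekdas2012efficient} (Theorem 1 there), $\frac{1}{4}N\alpha D \leq OPT$ with $\alpha = 0.4786$ for $N$ identical non-overlapping disks of diameter $D$ (Equation~\ref{eq:n_bound}), and lifts it to 3D by intersecting a plane through three sphere centers, reducing each local configuration to the planar case (Figure~\ref{fig:sphere_intersection}). This bounds $N$ directly by the tour length without any volume argument, and is insensitive to where on the shell the detection actually fires. Combining it with $|J^*_{c}| \leq |J^*_{out}| + ND_{max}$ then yields $|J_{c}| \leq (1+\epsilon)\bigl(1 + \frac{4D_{max}}{\alpha D_{min}}\bigr)E[|J^*_{out}|]$, the stated $O(D_{max}/D_{min})$ ratio. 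To salvage your write-up you would either need to add the hypothesis that the enclosed solid balls (not just the spheres) are pairwise disjoint, or replace your counting lemma with a packing-style bound of this kind.
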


During the online detection process, we do not know the actual diameter of the detection range other than the maximum and minimum range as $D_{max}$ and $D_{min}$. Using Lemma~\ref{lem:minkowski} will introduce a large ratio due to the uncertainty of the detection range. Therefore, we borrow the idea from Tekas et. al.~\cite{tekdas2012efficient} to bound the number $N$ w.r.t. the trajectory length. From Theorem 1 in~\cite{tekdas2012efficient}, they provide a bound that explore the minimum traveling cost $OPT$ among $N$ identical non-overlapping disks of diameter $D$ as
\begin{equation}\label{eq:n_bound}
    \frac{1}{4}N\alpha D \leq OPT
\end{equation}
where, $\alpha = 0.4786$. The same strategy works in 3D as show in Figure~\ref{fig:sphere_intersection} by intersecting a plane that connects the 3 sphere centers that converts the problem to the original 2D version. 

\begin{figure}
    \centering
    \includegraphics[width=0.4\textwidth]{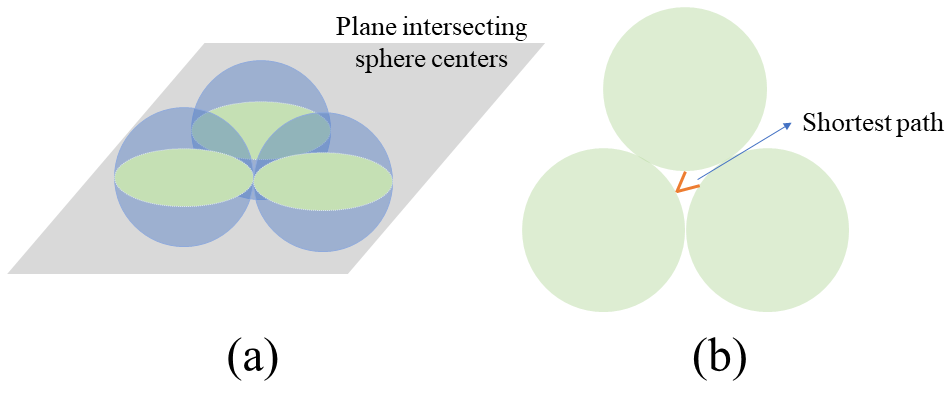}
    \caption{$(a)$~Given 3 spheres of the same radius $r$, we can intersect a plane through their spherical centers. $(b)$~The resulting intersection between the spheres and the plane are 3 disks of the same radius. The shortest path through those spheres is $0.4876 r$~\cite{tekdas2012efficient}.}
    \label{fig:sphere_intersection}
\end{figure}

\begin{proof}
Assume that the optimal trajectory for the spheres centers are $J^*_{c}$, which is bounded by the following.
$|J^*_{c}| \leq |J^*_{out}| + ND_{max}$
where $J^*_{out}$ is the optimal trajectory that visits all spheres with $D_{max}$.
We can plug in the upper bound for $N$ using Eq~\ref{eq:n_bound} so that the equation becomes
$|J_{c}| \leq (1+\epsilon)(1 + \frac{4D_{max}}{\alpha D_{min}})) E[|J^*_{out}|]
$
\end{proof}

\section{Experiments}\label{sec:expr}
\begin{figure}[th!]
    \centering
	\includegraphics[width=0.45\textwidth]{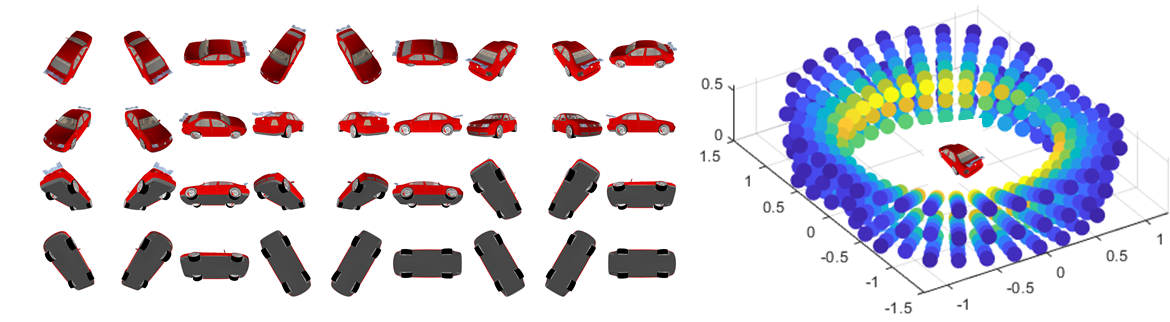}
    \caption{Omnidirectional view of a car and their corresponding scores. We applied a threshold to limit the views. The rest views formed a diameter-bounded region. The object orientations are shown in the center of the viewing scores. (High entropy to low entropy corresponds blue to yellow view points.)}
    \label{fig:region_shape}
\end{figure}
\begin{figure}[th!]
    \centering
	\includegraphics[width=0.46\textwidth]{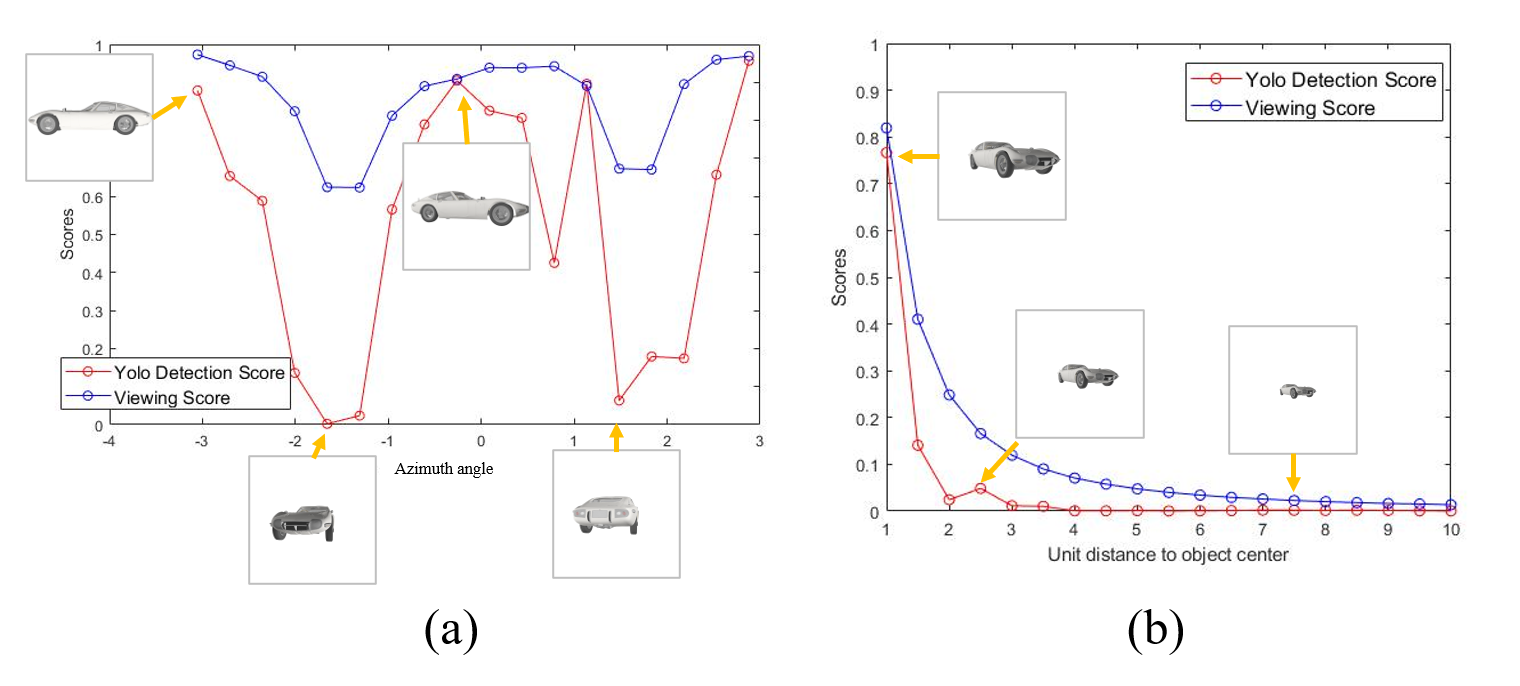}
    \caption{Comparison between yolo detection and our viewing scores. $(a)$~Scores for images with the same elevation angle and azimuth angle ranging from $-\pi$ to $\pi$. $(b)$~Scores for images of the same elevation and azimuth angle with increasing viewing distance to the center of the car. }
    \label{fig:car_detection}
\end{figure}
In this section, we first validate our assumption of the diameter-bounded region by evaluating omnidirectional views from various objects' 3D models. Then we compare the trajectory length and computation time with a method  proposed by Elbassioni et. al.~\cite{elbassioni2009approximation} for fat objects. Tests are done using a desktop with i7-9700 CPU and NVIDIA 2080 GPU. 

\subsection{Entropy-based region construction}

\begin{table}[th!]
    \centering
    \begin{tabular}{c|c|c|c|c|c|c|c}
        \toprule
        \bottomrule
        \multicolumn{2}{c|}{}& Car & Bus & Piano &  Table & Chair & Bed \\
        \toprule
        \bottomrule
        \multirow{4}{*}
        & Mean Maximum & 8.2 & 17.3 & 6.2 & 5.6 &  3.4 & 5.2   \\
        & Mean Minimum & 5.4 & 13.4 & 4.5 & 3.3 &  1.3 & 3.2   \\
        & Unit 1 distance & 3.5 & 10.4 & 3.1 & 2.5 &  0.5  & 2.0   \\
        \bottomrule
    \end{tabular}
    \caption{The average maximum and minimum viewing/region diameter (m) for each object by applying a score threshold of $0.3$. Unit 1 distance is the minimum radius that views are taken from the center of the object.}
    \label{tab:diameters}
\end{table}

\begin{table*}[th!]
    \centering
    \footnotesize
    \begin{tabular}{c|c|c|c|c||c|c|c||c|c|c}
        \toprule
        \multicolumn{2}{c|}{}& \multicolumn{3}{c||}{100 objects} & \multicolumn{3}{c||}{250 objects} & \multicolumn{3}{c}{500 objects} \\
        \cline{3-11}
        \multicolumn{2}{c|}{}& mean $|J|$ & STD &mean time&  mean $|J|$  & STD & mean time  &mean $|J|$ & STD &mean time\\
        \toprule
        \bottomrule
        \multirow{2}{*}{Car}
        & $\alpha$-fat &  3.1295 & 0.285 & 0.15 & 6.835 & 0.805& 1.12& 12.809 & 0.688& 7.521\\
        & \textbf{Ours}& 1.5461 & 0.073 & 0.03 & 2.692 & 0.077 & 0.26& 4.129 & 0.087 & 1.644 \\
        \midrule
        \multirow{2}{*}{Bus}
        & $\alpha$-fat & 2.7693 & 0.367 & 0.16 & 5.612 & 0.821& 1.13& 9.888 & 0.642& 7.515\\
        & \textbf{Ours}& 1.3284 & 0.064 & 0.04 & 2.225 & 0.085 & 0.27& 3.191 & 0.079 & 1.635 \\
        \midrule
        \multirow{2}{*}{Chair}
        & $\alpha$-fat & 3.2569 & 0.429 & 0.15 & 7.109 & 0.702& 1.12& 12.422 & 0.716& 7.340\\
        & \textbf{Ours}& 1.6023 & 0.071 & 0.03 & 2.788 & 0.075 & 0.27& 4.350 & 0.050 & 1.626 \\
        \midrule
        \multirow{2}{*}{Bed}
        & $\alpha$-fat & 3.3159 & 0.283 & 0.16 & 7.104 & 0.901& 1.14 & 13.498 & 1.292& 7.012\\
        & \textbf{Ours}& 1.6167 & 0.049 & 0.03 & 2.856 & 0.068 & 0.27& 4.471 & 0.074 & 1.519 \\
        \midrule
        \multirow{2}{*}{Table}
        & $\alpha$-fat & 3.3569 & 0.399 & 0.14 & 7.167 & 0769& 1.13 & 15.783 & 0.947& 7.081\\
        & \textbf{Ours}& 1.6482 & 0.065 & 0.04 & 2.905 & 0.076 & 0.27& 4.455 & 0.068 & 1.610 \\
        \midrule
        \bottomrule
    \end{tabular}
    \caption{A comparison of the trajectory length ($10^3 m$) $|J|$ and the run-time (seconds) between the $\alpha$-fat method and ours.}
    \label{tab:length_runtime_comparison}
\end{table*}

\begin{figure*}[t!]
    \centering
	\includegraphics[width=0.6\textwidth]{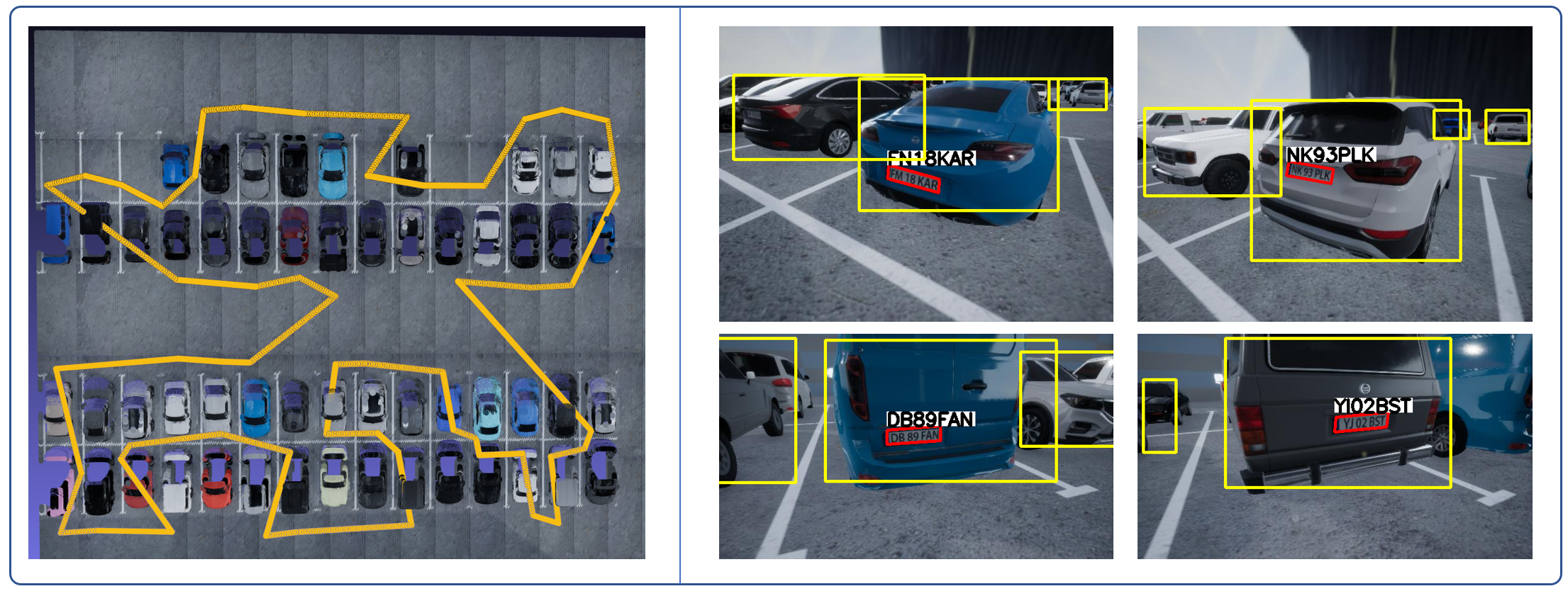}
    \caption{Left: Top-down view of the trajectory planned for an aerial vehicle to detect all license plates of all vehicles in a parking lot. Right: Four examples of vehicle license plates detection and recognition using ALPR-unconstrained network~\cite{silva2018a}}
    \label{fig:license_plates_detection}
\end{figure*}
Given a set of views for an object, a pre-trained neural network can output detection scores for each view. However, the current training dataset (such as COCO~\cite{lin2014microsoft}) is collected mostly from canonical views. Due to the lack of sufficient views from non-canonical perspectives and unbalanced samples, it is biased to generalize viewing scores using a neural network approach. Therefore, we propose a viewing score that allows the differentiation of objects from different views and thus approximates their corresponding detectability. Our proposed viewing scores use the entropy of image edge orientations to distinguish views from different perspectives. Since edge orientation distribution does not change much while changing the viewing distance, we also introduce an object-to-image ratio. 
The viewing score $S$ is given as
\begin{equation}\label{eq:viewing_score}
    S =(-\sum P *\log{P}) * R 
\end{equation}
where $P$ is the probability of the edge orientation distribution per pixel quantized into 360 bins, $-\sum P *\log{P}$ is the image edge entropy, and $R = \frac{Num \ of \  object \  pixels}{Image \ Area}$ is the object to image area ratios. When an object is further away or the image edge is biased towards a small set of orientations, $S$ will be small.

A similar idea was proposed by~\cite{hrvzic2019local} to avoid biases from the training data set and network architecture. We evaluated 6 different types of models from Shapenet~\cite{shapenet2015} that contain cars, buses, cups, tables, chairs, and beds. For each category, we obtained 1080 images for 10 different models. 

We compute the scores for images and threshold the entropy value to $0.3$ and the resulting region shape is shown in Figure~\ref{fig:region_shape}. 
To compare with actual detection scores, we evaluate our `Shapenet` images using a pre-trained yolo network~\cite{yolov3}. 
For more canonical views, there is a high correlation between the detection scores and our viewing score $S$ as shown in Figure~\ref{fig:car_detection}. 

For all the models, we evaluate the expected maximum and minimum region diameter for the given threshold, which are shown in Table~\ref{tab:diameters}. It is clear that for objects like cars, both diameters are very similar. However, for objects such as chairs and beds, the information is much more limited from the side with a smaller cross-section area. It is because those objects have very small footprints from the side and thus have fewer varieties of edges presented in the image. 

\subsection{Trajectory comparison}
We also test our algorithm in simulation to show that the proposed method is efficient in length and computational cost. The work proposed by Elbassioni et. al.~\cite{elbassioni2009approximation} obtains an approximation factor of $9.1\alpha + 1$ for disjoint $\alpha$ fat regions, where $\alpha$ is defined as a measure of fatness and it has a similar problem set up comparing to our proposed diameter-bounded region problem. Therefore, we use this method as the baseline to compare ours with. We will denote Elbassioni et. al.~\cite{elbassioni2009approximation}'s method as $\alpha$-fat.
$\alpha$-fat greedily selects a representative point for each neighborhood that is close to each other. The trajectory is planned using a TSP solver~\cite{concorde}. 

For online cases, we compare the trajectory lengths and computation time for each method with randomly selected 100, 250, and 500 locations within a 3D cube with a 100-meter edge length. For each location, there is a $D_{max}$ and $D_{min}$. We sample 108 points on each sphere with diameter $D_{max}$ for the $\alpha$-fat method. Since $D_{max}$ and $D_{min}$ are only a bound for the ground truth diameter for each region, we sample ground truth diameters uniformly at random between the bounds. To calculate the trajectory through a set of 3D locations, we also use the Concord TSP solver~\cite{concorde}. The comparison in Table~\ref{tab:length_runtime_comparison} shows that the trajectory length and run-time for $\alpha$-fat method are much higher compared to our method. It is because each neighborhood needs to be represented by a set of surface points (108 points) and the computation is affected by the density of those points.

\subsection{Trajectory comparison for car detection}
We show that our method can be implemented efficiently in a photo-realistic environment (Unreal Engine 4~\cite{unrealengine} with city model~\cite{city}) to detect various objects in real scenes. We randomly placed 30 cars from Shapenet models~\cite{shapenet2015} in the scene. We use the `yolo' detection network~\cite{yolov3} for each image captured along the trajectory. Since the detection is online, we picked the detection range for all the objects based on Table~\ref{tab:diameters} and a set of predetermined locations. As shown in Figure~\ref{fig:car_detection}, our trajectory detects all the cars in the scene. To avoid obstacles such as buildings in the scene, we implemented RRT*~\cite{islam2012rrt} to adjust the trajectory. 

\subsection{Trajectory planning for license plate extraction}

The previous experiment assumes known vehicle locations. If vehicle locations are given, it is usually not necessary to plan for general-purpose detections again. However, if more specific attributes, such as vehicle brand, color, and license plate numbers, are required, an efficient trajectory is desired. In this experiment, we show that our algorithm can efficiently detect and recognize the license plates for all vehicles in a parking lot. We choose a parking lot environment~\cite{parkinglot} as the testing cases from Unreal Engine. This environment contains 54 different vehicle with license plates, including van, SUV, hatchback, sedan, pickup, etc. License plates are only on the back of each vehicle since not all states in the U.S. require front plates. Each vehicle's location and orientation is given in advance. The vehicle models are also given to infer license plate locations. For each license plate, we impose a viewing score of 0.6 using Eq~\ref{eq:viewing_score}. For plate detection and extraction, we use a pre-trained network `ALPR-unconstrained'~\cite{silva2018a}. 
Given each vehicle's location, orientation, and  viewing score, we can construct a diameter-bounded region for each license plate. The resulting trajectory that detects all the license plates for all vehicles in the parking lot is shown in Figure~\ref{fig:license_plates_detection}. We also ran detection on each license plate, which successfully recognized most of the license plates. The resulting detection is shown in Figure~\ref{fig:license_plates_detection} with an average detection score of $84.3\%$, which is far above the score threshold ($0.3$) for obtaining the detection region.

\section{Conclusion}
We studied the problem of planning the shortest tour and viewing locations for objects in a scene. Since the detection score cannot be predicted without actually seeing the object, we correlate the viewing pose to the detection score distribution. Since a pre-trained network may suffer from biased datasets and training architectures, we use an entropy-based viewing score to capture detectability distribution. Such regions are diameter-bounded and can be non-convex. We presented an efficient method to find a tour that visits all of them and detects all objects of interest. Our formulation solves for both disjoint and non-disjoint cases, with the approximation ratios of $O(\frac{D_{max}}{D_{min}})$ and $O(\frac{D_{max}^2}{D_{min}^2})$, respectively.

Our method requires object locations in prior. A possible future direction for unknown locations is to first plan an exploration step to extract the locations.




\bibliographystyle{IEEEtran}
\bibliography{IEEEabrv,egbib}

\begin{thebibliography}{10}
\providecommand{\url}[1]{#1}
\csname url@rmstyle\endcsname
\providecommand{\newblock}{\relax}
\providecommand{\bibinfo}[2]{#2}
\providecommand\BIBentrySTDinterwordspacing{\spaceskip=0pt\relax}
\providecommand\BIBentryALTinterwordstretchfactor{4}
\providecommand\BIBentryALTinterwordspacing{\spaceskip=\fontdimen2\font plus
\BIBentryALTinterwordstretchfactor\fontdimen3\font minus
  \fontdimen4\font\relax}
\providecommand\BIBforeignlanguage[2]{{%
\expandafter\ifx\csname l@#1\endcsname\relax
\typeout{** WARNING: IEEEtran.bst: No hyphenation pattern has been}%
\typeout{** loaded for the language `#1'. Using the pattern for}%
\typeout{** the default language instead.}%
\else
\language=\csname l@#1\endcsname
\fi
#2}}

\bibitem{peng2018view}
C.~Peng and V.~Isler, ``View selection with geometric uncertainty modelling,''
  in \emph{Robitcs: Science and Systems}, 06 2018, pp. 1--11.

\bibitem{peng2019view}
P.~Cheng and I.~Volkan, ``Adaptive view planning for aerial 3d
  reconstruction,'' in \emph{2019 International Conference on Robotics and
  Automation (ICRA)}, May 2019, pp. 2981--2987.

\bibitem{vasquez2014volumetric}
J.~I. Vasquez-Gomez, L.~E. Sucar, R.~Murrieta-Cid, and E.~Lopez-Damian,
  ``Volumetric next-best-view planning for 3d object reconstruction with
  positioning error,'' \emph{International Journal of Advanced Robotic
  Systems}, vol.~11, no.~10, pp. 159--172, 2014.

\bibitem{bircher2016receding}
A.~Bircher, M.~Kamel, K.~Alexis, H.~Oleynikova, and R.~Siegwart, ``Receding
  horizon" next-best-view" planner for 3d exploration,'' in \emph{international
  conference on robotics and automation}.\hskip 1em plus 0.5em minus
  0.4em\relax IEEE, 2016, pp. 1462--1468.

\bibitem{isler2016information}
S.~Isler, R.~Sabzevari, J.~Delmerico, and D.~Scaramuzza, ``An information gain
  formulation for active volumetric 3d reconstruction,'' in \emph{International
  Conference on Robotics and Automation}.\hskip 1em plus 0.5em minus
  0.4em\relax IEEE, 2016, pp. 3477--3484.

\bibitem{fan2016automated}
X.~Fan, L.~Zhang, B.~Brown, and S.~Rusinkiewicz, ``Automated view and path
  planning for scalable multi-object 3d scanning,'' \emph{ACM Transactions on
  Graphics (TOG)}, vol.~35, no.~6, pp. 1--13, 2016.

\bibitem{scott2003view}
W.~R. Scott, G.~Roth, and J.-F. Rivest, ``View planning for automated
  three-dimensional object reconstruction and inspection,'' \emph{ACM Computing
  Surveys (CSUR)}, vol.~35, no.~1, pp. 64--96, 2003.

\bibitem{yolov3}
J.~Redmon and A.~Farhadi, ``Yolov3: An incremental improvement,'' \emph{arXiv},
  2018.

\bibitem{matterport_maskrcnn_2017}
W.~Abdulla, ``Mask r-cnn for object detection and instance segmentation on
  keras and tensorflow,'' \url{https://github.com/matterport/Mask_RCNN}, 2017.

\bibitem{zhu2017target}
Y.~Zhu, R.~Mottaghi, E.~Kolve, J.~J. Lim, A.~Gupta, L.~Fei-Fei, and A.~Farhadi,
  ``Target-driven visual navigation in indoor scenes using deep reinforcement
  learning,'' in \emph{2017 IEEE international conference on robotics and
  automation (ICRA)}.\hskip 1em plus 0.5em minus 0.4em\relax IEEE, 2017, pp.
  3357--3364.

\bibitem{unrealengine}
UnrealEngine, ``{Unreal Engine 4},''
  \url{https://www.unrealengine.com/en-US/blog}, 2017.

\bibitem{arora2003approximation}
S.~Arora, ``Approximation schemes for np-hard geometric optimization problems:
  A survey,'' \emph{Mathematical Programming}, vol.~97, no. 1-2, pp. 43--69,
  2003.

\bibitem{christofides1976worst}
N.~Christofides, ``Worst-case analysis of a new heuristic for the travelling
  salesman problem,'' CARNEGIE-MELLON UNIV PITTSBURGH PA MANAGEMENT SCIENCES
  RESEARCH GROUP, Tech. Rep., 1976.

\bibitem{arora1998polynomial}
S.~Arora, ``Polynomial time approximation schemes for euclidean traveling
  salesman and other geometric problems,'' \emph{Journal of the ACM (JACM)},
  vol.~45, no.~5, pp. 753--782, 1998.

\bibitem{mitchell1999guillotine}
J.~S. Mitchell, ``Guillotine subdivisions approximate polygonal subdivisions: A
  simple polynomial-time approximation scheme for geometric tsp, k-mst, and
  related problems,'' \emph{SIAM Journal on computing}, vol.~28, no.~4, pp.
  1298--1309, 1999.

\bibitem{de2005tsp}
M.~de~Berg, J.~Gudmundsson, M.~J. Katz, C.~Levcopoulos, M.~H. Overmars, and
  A.~F. van~der Stappen, ``Tsp with neighborhoods of varying size,''
  \emph{Journal of Algorithms}, vol.~57, no.~1, pp. 22--36, 2005.

\bibitem{safra2006complexity}
S.~Safra and O.~Schwartz, ``On the complexity of approximating tsp with
  neighborhoods and related problems,'' \emph{computational complexity},
  vol.~14, no.~4, pp. 281--307, 2006.

\bibitem{papadimitriou1977euclidean}
C.~H. Papadimitriou, ``The euclidean travelling salesman problem is
  np-complete,'' \emph{Theoretical Computer Science}, vol.~4, pp. 237--244,
  1977.

\bibitem{arkin1994approximation}
E.~M. Arkin and R.~Hassin, ``Approximation algorithms for the geometric
  covering salesman problem,'' \emph{Discrete Applied Mathematics}, vol.~55,
  no.~3, pp. 197--218, 1994.

\bibitem{dumitrescu2003approximation}
A.~Dumitrescu and J.~S. Mitchell, ``Approximation algorithms for tsp with
  neighborhoods in the plane,'' \emph{Journal of Algorithms}, vol.~48, no.~1,
  pp. 135--159, 2003.

\bibitem{elbassioni2005approximation}
K.~Elbassioni, A.~V. Fishkin, N.~H. Mustafa, and R.~Sitters, ``Approximation
  algorithms for euclidean group tsp,'' in \emph{International Colloquium on
  Automata, Languages, and Programming}.\hskip 1em plus 0.5em minus 0.4em\relax
  Springer, 2005, pp. 1115--1126.

\bibitem{dumitrescu2016traveling}
A.~Dumitrescu and C.~D. T{\'o}th, ``The traveling salesman problem for lines,
  balls, and planes,'' \emph{ACM Transactions on Algorithms (TALG)}, vol.~12,
  no.~3, pp. 1--29, 2016.

\bibitem{plonski2019approximation}
P.~A. Plonski and V.~Isler, ``Approximation algorithms for tours of
  height-varying view cones,'' \emph{The International Journal of Robotics
  Research}, vol.~38, no. 2-3, pp. 224--235, 2019.

\bibitem{stefas2018approximation}
N.~Stefas, P.~A. Plonski, and V.~Isler, ``Approximation algorithms for tours of
  orientation-varying view cones,'' in \emph{2018 IEEE International Conference
  on Robotics and Automation}.\hskip 1em plus 0.5em minus 0.4em\relax IEEE,
  2018, pp. 1--6.

\bibitem{kamousi2013euclidean}
P.~Kamousi and S.~Suri, ``Euclidean traveling salesman tours through stochastic
  neighborhoods,'' in \emph{International Symposium on Algorithms and
  Computation}.\hskip 1em plus 0.5em minus 0.4em\relax Springer, 2013, pp.
  644--654.

\bibitem{tekdas2012efficient}
O.~Tekdas, D.~Bhadauria, and V.~Isler, ``Efficient data collection from
  wireless nodes under the two-ring communication model,'' \emph{The
  International Journal of Robotics Research}, vol.~31, no.~6, pp. 774--784,
  2012.

\bibitem{jaillet1988priori}
P.~Jaillet, ``A priori solution of a traveling salesman problem in which a
  random subset of the customers are visited,'' \emph{Operations research},
  vol.~36, no.~6, pp. 929--936, 1988.

\bibitem{citovsky2017tsp}
G.~Citovsky, T.~Mayer, and J.~S. Mitchell, ``Tsp with locational uncertainty:
  the adversarial model,'' \emph{arXiv preprint arXiv:1705.06180}, 2017.

\bibitem{blum2007approximation}
A.~Blum, S.~Chawla, D.~R. Karger, T.~Lane, A.~Meyerson, and M.~Minkoff,
  ``Approximation algorithms for orienteering and discounted-reward tsp,''
  \emph{SIAM Journal on Computing}, vol.~37, no.~2, pp. 653--670, 2007.

\bibitem{mnih2015human}
V.~Mnih, K.~Kavukcuoglu, D.~Silver, A.~A. Rusu, J.~Veness, M.~G. Bellemare,
  A.~Graves, M.~Riedmiller, A.~K. Fidjeland, G.~Ostrovski, \emph{et~al.},
  ``Human-level control through deep reinforcement learning,'' \emph{nature},
  vol. 518, no. 7540, pp. 529--533, 2015.

\bibitem{dumitrescu2017constant}
A.~Dumitrescu and C.~D. Toth, ``Constant-factor approximation for tsp with
  disks,'' in \emph{A Journey Through Discrete Mathematics}.\hskip 1em plus
  0.5em minus 0.4em\relax Springer, 2017, pp. 375--390.

\bibitem{elbassioni2009approximation}
K.~Elbassioni, A.~V. Fishkin, and R.~Sitters, ``Approximation algorithms for
  the euclidean traveling salesman problem with discrete and continuous
  neighborhoods,'' \emph{International Journal of Computational Geometry \&
  Applications}, vol.~19, no.~02, pp. 173--193, 2009.

\bibitem{silva2018a}
S.~M. Silva and C.~R. Jung, ``License plate detection and recognition in
  unconstrained scenarios,'' in \emph{2018 European Conference on Computer
  Vision (ECCV)}, Sep 2018, pp. 580--596.

\bibitem{lin2014microsoft}
T.-Y. Lin, M.~Maire, S.~Belongie, J.~Hays, P.~Perona, D.~Ramanan,
  P.~Doll{\'a}r, and C.~L. Zitnick, ``Microsoft coco: Common objects in
  context,'' in \emph{European conference on computer vision}.\hskip 1em plus
  0.5em minus 0.4em\relax Springer, 2014, pp. 740--755.

\bibitem{hrvzic2019local}
F.~Hr{\v{z}}i{\'c}, I.~{\v{S}}tajduhar, S.~Tschauner, E.~Sorantin, and
  J.~Lerga, ``Local-entropy based approach for x-ray image segmentation and
  fracture detection,'' \emph{Entropy}, vol.~21, no.~4, p. 338, 2019.

\bibitem{shapenet2015}
A.~X. Chang, T.~Funkhouser, L.~Guibas, P.~Hanrahan, Q.~Huang, Z.~Li,
  S.~Savarese, M.~Savva, S.~Song, H.~Su, J.~Xiao, L.~Yi, and F.~Yu,
  ``{ShapeNet: An Information-Rich 3D Model Repository},'' Stanford University
  --- Princeton University --- Toyota Technological Institute at Chicago, Tech.
  Rep. arXiv:1512.03012 [cs.GR], 2015.

\bibitem{concorde}
D.~Applegate, R.~Bixby, V.~Chvatal, and W.~Cook, ``{Concorde TSP solver},''
  \emph{URL: http://www.math.uwaterloo.ca/tsp/concorde.html}, 2017.

\bibitem{city}
MiradorStudio, ``Module based city pack,'' \url{http://www.istvanszalai.com/},
  2019.

\bibitem{islam2012rrt}
F.~Islam, J.~Nasir, U.~Malik, Y.~Ayaz, and O.~Hasan, ``$rrt^*$-smart: Rapid
  convergence implementation of $rrt^*$ towards optimal solution,'' in
  \emph{2012 IEEE international conference on mechatronics and
  automation}.\hskip 1em plus 0.5em minus 0.4em\relax IEEE, 2012, pp.
  1651--1656.

\bibitem{parkinglot}
lyoshko, ``{Background cars},''
  \url{https://www.unrealengine.com/marketplace/en-US/product/background-cars},
  2021.

\end{thebibliography}
\end{document}